\documentclass[reqno,a4paper]{amsart}
\usepackage{amssymb,latexsym,amsmath,enumerate}
\dedicatory{Dedicated to Raja --
dear friend, colleague, and inspirator.}
\addtolength{\oddsidemargin}{-.1in}
\addtolength{\evensidemargin}{-.1in}
\newcommand{\norm}[1]{\parallel\!#1\!\parallel}
\newcommand{\twonorm}[1]{\parallel\!#1\!\parallel_2}

\numberwithin{equation}{section} 
\newtheorem{theorem}{Theorem}
\numberwithin{theorem}{section}
\newtheorem{lemma}{Lemma}
\numberwithin{lemma}{section}
\newtheorem{proposition}{Proposition}
\numberwithin{proposition}{section}

\theoremstyle{definition}
\newtheorem{definition}{Definition}

\theoremstyle{remark}

\newtheorem*{remark}{Remark}
\newtheorem*{convention}{Convention}
\newcommand{\z}{\mathbf{Z}}
\newcommand{\q}{\mathbf{Q}}
\newcommand{\rd}{\mathbf{R}^d}
\newcommand{\rr}{\mathbf{R}}
\newcommand{\cc}{\mathbf{C}}
\newcommand{\qp}{\mathbf{Q}_p}
\newcommand{\zp}{\mathbf{Z}_p}
\newcommand{\one}{\mathbf 1}
\newcommand{\ff}{{\mathbf F}}
\newcommand{\fo}{{\mathcal F}}
\newcommand{\foi}{{\mathcal F}^{-1}}
\newcommand{\gl}{\lambda}
\newcommand{\bn}{B_n}
\newcommand{\bmn}{B_{-n}}
\newcommand{\qmn}{q^{-n}}

\newcommand{\fq}{\mathbf{F}_q}
\newcommand{\cn}{\mathcal{C}_n}
\newcommand{\sn}{\mathcal{S}_n}
\newcommand{\dn}{\mathcal{D}_n}

\newcommand{\pa}{P^\alpha}

\newcommand{\qtt}{\mathbf{Q}_3[\sqrt{3}]}
\DeclareMathOperator{\supp}{supp}
\DeclareMathOperator{\ave}{ave}
\DeclareMathOperator{\kar}{char}
\DeclareMathOperator{\tr}{Tr}

\begin{document}
\title[Finite approximations]{FINITE APPROXIMATIONS OF PHYSICAL MODELS OVER
LOCAL FIELDS}
\author{Erik M. Bakken}\address{Department of Mathematical Sciences\\The Norwegian
University of Science and Technology\\7491 Trondheim\\Norway}
\email{erikmaki@math.ntnu.no}

\author{Trond  Digernes}
\address{Department of Mathematical Sciences\\The Norwegian
University of Science and Technology\\7491 Trondheim\\Norway}
\email{digernes@math.ntnu.no}

\thanks{The research of the second named author was partially supported by the Norwegian Research Council} 

\begin{abstract}
We show that the Schr\"odinger operator associated with a physical system over a local field can be approximated in a very strong sense by finite Schr\"odinger operators. Some striking numerical results are included at the end of the article.
\end{abstract}

\maketitle
\tableofcontents
\section{Introduction}

In \cite{DVV94} it was shown that a quantum mechanical Hamiltonian of the form $H=-\Delta + V$, acting in $L^2(\rd)$, with potential $V(x)\to\infty$ as $|x|\to\infty$, can be approximated in a very strong sense by finite quantum systems. In this note we present a similar theorem for quantum systems over a local field $K$.

The results of \cite{DVV94} were later extended to a setting of locally compact abelian groups in \cite{AGK00}. The results of the latter thus supersede both those of  \cite{DVV94} and of this article. However, the proofs of \cite{AGK00} used non-standard analysis. We have found it worthwhile to present a proof which does not rely on non-standard methods.

In \cite{DVV94} two proofs of the main convergence theorem were given: a functional analytic one and a probabilistic one. The latter gave a somewhat stronger convergence result for stochastic Hamiltonians. In the present note only functional analytic methods will be considered. A stochastic proof will be discussed in a forthcoming paper.

In an earlier article \cite{MR3127393} finite approximations over $\qp$ were treated. The current article supersedes that one; also, the proofs which were omitted there, are given here.

In Section~\ref{localfields} we give a quick review of local fields.
In Section~\ref{finapprox} we construct finite models for the Schr\"odinger operator over a local field, and  in Section~\ref{main} we prove the main convergence theorem.

In Section~\ref{numeric} we use our finite models to carry out a numerical investigation of the Schr\"odinger operator over the quadratic extension $\q_3[\sqrt{3}]$ of $\q_3$. We show that there is remarkable agreement between numerical and theoretical values for both eigenvalues and eigenfunctions. Both types of eigenfunctions (radial ones and those supported on single shells) appear already at the finite level.

\section{Local fields}\label{localfields}
We give here some quick facts about local fields. For a thorough treatment, see the classic treatise of A.~Weil \cite[Ch.~I]{MR0427267}; for a quicker review, see the book of Kochubei \cite[Ch.\ 1.3]{MR1848777}.

A local field is a non-discrete, locally compact field. The only connected local fields are $\rr$ and $\cc$. Disconnected local fields are, in fact, totally disconnected. 

Every local field comes equipped with a canonical absolute value which defines its topology. It is is induced by the Haar measure and is called \emph{module} in \cite{MR0427267}. It is Archimedean in the case of $\rr$ and $\cc$, and non-Archimedean in all other cases; it coincides with the usual absolute values for the fields $\rr$, $\cc$, and $\qp$. For a general local field $K$ we will denote the canonical absolute value by $|\cdot|$ (or by $|\cdot|_K$ if needed for clarity); for $\qp$ we will denote it by $|\cdot|_p$.
\begin{convention}
Since all local fields except $\rr$ and $\cc$ are (totally) disconnected, it is customary to reserve the term 'local field' for a (totally) disconnected, non-discrete, locally compact field. We will follow that convention here.
\end{convention}
With this convention, there are two main types of local fields:\\
\emph{Characteristic zero.}
The basic example of a local field of characteristic zero is the $p$-adic field $\qp$ ($p$ a prime number). Every local field of characterisitic zero is a finite extension of $\qp$ for some $p$.\\
\emph{Positive characteristic.}
Every local field of positive characteristic $p$ is isomorphic to the field $\mathbf F_q((t))$ of Laurent series over a finite field $\mathbf F_q$, where $q=p^f$ for some positive integer $f\geq1$. 

Let $K$ be a local field with canonical absolute value $|\cdot|$. Following standard notation, we set
\[
O = \{x \in K : |x|\leq 1\},\quad P = \{x \in K : |x| < 1\},\quad U = O\setminus P.
\]
$O$ is a compact subring of $K$, called the \emph{ring of integers}. It is a discrete valuation ring, i.e., a principal ideal domain with a unique  maximal ideal. $P$ is the unique non-zero maximal ideal of $O$, called the \emph{prime ideal}, and any element $\beta\in P$ such that $P=\beta O$ is called a \emph{uniformizer} (or a \emph{prime element}) of $K$. For $\qp$ one can choose $\beta=p$, and for $\mathbf F_q((t))$ one can take $\beta=t$.\\
The set $U$ coincides with the \emph{group of units} of $O$. The quotient ring $O/P$ is a finite field. If $q=p^f$ is the number of elements in $O/P$ ($p$: a prime number, $f$: a natural number) and $\beta$ is a uniformizer, then $|\beta|=1/q$, and the range of values of $|\cdot|$ is $q^N$, $N\in\mathbf Z$. Further, if $S$ is a complete set of representatives for the residue classes in $O/P$, every non-zero element $x\in K$ can be written uniquely in the form:
\[
x=\beta^{-m}(x_0+x_1\beta+x_2\beta^2+\cdots),
\]
where $m\in\mathbf Z$, $x_j\in S$, $x_0\not\in P$. With $x$ written in this form, we have $|x|=q^m$.

For a general field extension $K/F$ we use the following standard notation: $f=\text{index of inertia}$, and $e=\text{ramification index}$. These are connected through the formula $[K:F]=ef$. If $e=1$, the extension is unramified, and if $f=1$, the extension is totally ramified.

\subsection{Characters and Fourier transform}\label{ft}
We first fix a Haar measure $\mu$ on $K$, normalized such that $\mu(O)=1$. The Fourier transform $\fo$ on $K$ is given by 
\[(\fo f)(\xi)=\int_K f(x)\chi(-x\xi)\,dx\,,\]
where $\chi$ is a suitably chosen non-trivial character on $K$, and $dx$ refers to the Haar measure just introduced. For our set-up it will be essential to use a character of rank zero\footnote{We remind the reader that the rank of a character $\chi$ is defined as the largest integer $r$ such that $\chi\vert_{B_r}\equiv 1$.}. We describe a procedure for achieving this in the two main cases:
\subsubsection{Case 1: $\kar K=0$}\label{char zero} In this case $K$ is a finite extension of $\qp$, and a character of rank zero is obtained by setting 
\[\chi(x)=\chi_p\left(\tr_{K/\qp}(\beta^{-d}x)\right),\quad x\in K\,,\] 
where
\begin{itemize}
\item $\chi_p$ is the canonical character on $\qp$---i.e., $\chi_p(x)=\exp(2\pi i\{x\})$, $\{x\}=\text{fractional part of $x$}$.
\item $\tr_{K/\qp}:K\to\qp$ is the trace function associated with the extension $K/\qp$.
\item $\beta$ is a uniformizer as defined above.
\item $d$ is the \emph{exponent of the different} of the extension $K/\qp$. It is the  largest integer $d$ such that $\tr_{K/\qp}(x)\in\zp$ for all $x$ with 
$|x|\leq q^d$ (note that $d\geq0$ since $\tr_{K/\qp}:O\to\zp$).
\end{itemize}
\subsubsection{Case 2: $\kar K>0$}\label{char nonzero} In this case we may identify $K$ with the field 
$\mathbf{F}_q((t))$ of Laurent series in the indeterminate $t$ with coefficients from the finite field $\mathbf{F}_q$, $q=p^f$, consisting of elements of the form $x=\sum_{i=m}^\infty x_i t^i$, $x_i\in \mathbf F_q$, $m\in\z$. Let $\eta$ denote the canonical character on $\ff_q$, i.e., $\eta(x)=\exp\left(\frac{2\pi i}{p}\tr_{\ff_q/\ff_p}(x)\right)$, and define 
\[\chi(x)=\eta(x_{-1})\,,\]
where $x_{-1}$ refers to the expansion $x=\sum_{i=m}^\infty x_it^i$. Then $\chi$ is a rank zero character on $K=\fq((t))$.

Notice that any Fourier transform based on a rank zero character is an $L^2$-isometry with respect to the normalized Haar measure defined above (since $\fo\one_O=\one_O$ for any such Fourier transform $\fo$; here and elsewhere $\one$ denotes characteristic function). Thus $\foi=\fo^*$ is given by
\[(\foi f)(x)=(\fo^* f)(x)=\int_K f(y)\chi(xy)\,dy.\]
\begin{convention}\label{ftconv}For the rest of this article $\fo$ will denote a Fourier transform based on a rank zero character on $K$.
\end{convention}

\section{Finite approximations over a local field}\label{finapprox}
Our object of  study is a version of the Schr\"odinger operator, defined for $\qp$ in the book of Vladimirov, Volovich, Zelenov \cite{VVZ94}, and generalized to an arbitrary local field $K$ by Kochubei in \cite{MR1848777}:
\[H=D^\alpha+V\,,\]
regarded as an operator in $L^2(K)$. 
Here $\alpha>0$ \footnote{For a direct analog of the Laplacian one should set $\alpha=2$. However, as is customary in the non-Archimedean setting, one works with an arbitrary $\alpha>0$, since the qualitative behavior of the operator $H$ does not change with $\alpha>0$.}, $D=\foi Q\fo$ where $(Qf)(x)=|x| f(x)$ is the position operator, and $\fo$ is the Fourier transform on $L^2(K)$. $V$ (the potential) is multiplication by a \emph{radial} function: $(Vf)(x)=v(x)f(x)$, $v(x)=w(|x|)$ for some function $w$ defined on $[0,\infty)$. We assume $v$ to be non-negative and continuous and that $v(x)\to\infty$ as $|x|\to\infty$. 

Due to a conflict of notation later in this article, we will use the symbol $P$ for the differentiation operator (instead of $D$). With this notation we have
\[H=P^\alpha+V\,.\] 
The operator $H$ has been thoroughly analyzed (see \cite{VVZ94} for $K=\qp$ and \cite{MR1848777} for general $K$): It is self-adjoint on the domain $\{f\in L^2(K):P^\alpha f+Vf\in L^2(K)\}$, has discrete spectrum, and all eigenvalues have finite multiplicity. Our next task is to set up a finite model for this operator.
\subsection{Finite model}\label{gn}
Keep the above notation, i.e.: $K$ is a local field, $q=p^f$ is the number of elements in the finite field $O/P$, $\beta$ is a uniformizer, and $S$ is a complete set of representatives for $O/P$. For each integer $n$ set $\bn=\beta^{-n}O=\text{ball of radius $q^n$}$. Then $\bn$ is an open, additive subgroup of $K$. For $n>0$ we set $G_n=\bn/\bmn$. Then $G_n$ is a finite group with $q^{2n}$ elements. Since the subgroup $\bmn$ will appear quite frequently, we will often denote it by $H_n$, to emphasize its role as a subgroup. So $H_n=\bmn=\beta^nO=\text{ball of radius $q^{-n}$}$, and $G_n=H_{-n}/H_n$. Each element of $G_n$ has a unique representative of the form $a_{-n}\beta^{-n}+a_{-n+1}\beta^{-n+1}+\dots+a_{-1}\beta^{-1}+ a_0+a_1\beta+\dots+ a_{n-2}\beta^{n-2}+a_{n-1}\beta^{n-1}$, $a_i\in S $. We denote this set by $X_n$, and call it \emph{the canonical set of representatives} for $G_n$; we also give it the group structure coming from its natural identification with $G_n$.

Let again $\mu$ denote the normalized Haar measure on $K$ (cfr.\ \ref{ft}). Since $H_n$ is an open subgroup of $K$, we obtain a Haar measure $\mu_n$ on  $G_n=H_{-n}/H_n$  
by setting $\mu_n(x+H_n)=\mu(x+H_n)=\mu(H_n)=q^{-n}$, for $x+H_n\in G_n$.

So each "point" $x+H_n$ of $G_n$ has mass $q^{-n}$, and the total mass of $G_n$ is $q^{2n}\cdot q^{-n}=q^n$.

With this choice of Haar measure on $G_n$ the mapping which sends the characteristic function of the point $x+H_n$ in $G_n$ to the characteristic function of the subset $x+H_n$ of $K$, is an isometric imbedding of $L^2(G_n)$ into $L^2(K)$. We regard operators on $L^2(G_n)$ as operators on $L^2(K)$ via this imbedding, by setting them equal to 0 on the orthogonal complement of the image of $L^2(G_n)$ in $L^2(K)$.

We introduce the following subspaces of $L^2(K)$, along with their orthogonal projections :
\begin{itemize}
\item $\mathcal C_n=\{f\in L^2(K)|\supp(f)\subset B_{n}\}.$ The corresponding orthogonal projection is denoted by $C_n$ and is given by: $C_nf=\one_{B_n}f$.
\item $\mathcal S_n=\{f\in L^2(K)|\text{$f$ is locally constant of index $\leq q^{-n}$}\}.$ The corresponding orthogonal projection is denoted by $S_n$ and is given by:\\ $(S_nf)(x)=q^n\int_{H_n}f(x+y)\,dy
=\frac{1}{\mu(H_n)}\int_{H_n}f(x+y)\,dy=\ave(f,n,x)$, where we have introduced the notation $\ave(f,n,x)$ for the average value of $f$ over $x+H_n$.
\item $\mathcal D_n=\mathcal C_n\cap\mathcal S_{n}.$ 
The corresponding orthogonal projection is denoted by $D_n$.
\end{itemize}
Note that $L^2(G_n)$ is mapped onto $\mathcal D_n$ via the isometric imbedding mentioned above. Thus $L^2(G_n)$ can be thought of as the set of functions on $K$ which have support in $B_n$ and which are invariant under translation by elements of $H_n\,(=B_{-n})$.
\begin{lemma}
The projections $C_n$ and $S_n$ commute, thus the projection $D_n$ onto the subspace $\mathcal D_n$ is given by:
\[D_n=C_nS_n=S_nC_n.\]
\end{lemma}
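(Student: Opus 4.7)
The plan is to verify the pointwise identity $C_n S_n f = S_n C_n f$ for every $f\in L^2(K)$, and then invoke the standard fact that the product of two commuting self-adjoint projections is the orthogonal projection onto the intersection of their ranges.

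The key structural observation I would establish first is that, since $n>0$, we have $H_n=B_{-n}\subset B_n$, and both are additive subgroups of $K$. Consequently $B_n$ is a (disjoint) union of $H_n$-cosets, which in concrete terms means: for any $x\in K$ and any $y\in H_n$, one has $x+y\in B_n$ if and only if $x\in B_n$. This is the only nontrivial input.

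Using this, I would compute both sides pointwise. On the one hand,
\[
(C_n S_n f)(x)=\one_{B_n}(x)\cdot\ave(f,n,x)=\one_{B_n}(x)\cdot\frac{1}{\mu(H_n)}\int_{H_n}f(x+y)\,dy.
\]
On the other hand,
\[
(S_n C_n f)(x)=\ave(\one_{B_n}f,n,x)=\frac{1}{\mu(H_n)}\int_{H_n}\one_{B_n}(x+y)f(x+y)\,dy.
\]
If $x\in B_n$, the observation above gives $\one_{B_n}(x+y)=1$ for all $y\in H_n$, so the second expression reduces to $\ave(f,n,x)$, matching the first. If $x\notin B_n$, the observation gives $\one_{B_n}(x+y)=0$ for every $y\in H_n$, so the second expression vanishes, and the first vanishes because of the outer factor $\one_{B_n}(x)$. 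Hence $C_nS_n=S_nC_n$ on all of $L^2(K)$.

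Finally, since $C_n$ and $S_n$ are commuting orthogonal projections with ranges $\mathcal C_n$ and $\mathcal S_n$ respectively, the product $C_nS_n=S_nC_n$ is itself an orthogonal projection, and its range is $\mathcal C_n\cap\mathcal S_n=\mathcal D_n$; thus $D_n=C_nS_n=S_nC_n$. The main (and really only) obstacle is the coset observation $H_n\subset B_n$ together with the ultrametric/subgroup property making $B_n$ a union of $H_n$-cosets; everything else is a direct unwinding of the definitions.
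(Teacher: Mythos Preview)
Your proof is correct and follows essentially the same approach as the paper: both arguments reduce to the ultrametric/subgroup observation that $\one_{B_n}(x+y)=\one_{B_n}(x)$ for $y\in H_n$, and then compare $(S_nC_nf)(x)$ with $(C_nS_nf)(x)$ pointwise. The paper carries this out in a single chain of equalities rather than a case split, and omits your final paragraph on why commuting projections yield the projection onto the intersection, but these are purely cosmetic differences.
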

\begin{proof}
\begin{align*}
(S_{n}C_nf)(x)&=q^n\int_{H_n}(C_nf)(x+y)\,dy=q^n\int_{H_n}\one_{B_n}(x+y)f(x+y)\,dy\\
&\stackrel{(*)}{=}q^n\int_{H_n}\one_{B_n}(x)f(x+y)\,dy=\one_{B_n}(x)q^n\int_{H_n}f(x+y)\,dy\\&=(C_nS_{n}f)(x)\,
\end{align*}
where the equality $(*)$ follows from ultrametricity, namely: $x+y\in B_n\Longleftrightarrow x\in B_n$ when $y\in H_n=B_{-n}$.
\end{proof}

We next show that the Fourier transform behaves nicely with respect to these subspaces.
\begin{proposition}\label{propcommute}
We have:
\[\fo\mathcal C_n=\mathcal S_{n},\quad \fo\mathcal S_n=\mathcal C_{n},\text{ and hence }\fo\mathcal D_n=\mathcal D_n,
\]
and the same relations hold with $\foi$ in place of $\fo$.\\
As a consequence, the following commutation relations hold:
\[\fo C_n=S_n\fo,\quad \fo S_n=C_n\fo,\quad \fo D_n=D_n\fo.\]
\end{proposition}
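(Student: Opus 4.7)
The plan is to prove the two set identities $\fo\cn=\sn$ and $\fo\sn=\cn$ by direct Fourier calculation using that $\chi$ has rank zero; the equality $\fo\dn=\dn$ is then immediate from $\dn=\cn\cap\sn$, and the three commutation relations drop out of unitarity of $\fo$. For the inclusion $\fo\cn\subseteq\sn$, I would pick $f\in\cn$, $\xi\in K$ and $h\in H_n$ and compute
\[
(\fo f)(\xi+h)-(\fo f)(\xi)=\int_{\bn}f(x)\chi(-x\xi)\bigl[\chi(-xh)-1\bigr]\,dx.
\]
For $x\in\bn=\beta^{-n}O$ and $h\in H_n=\beta^nO$, ultrametricity gives $xh\in O$, and rank zero of $\chi$ forces $\chi(-xh)=1$, so the integrand vanishes; hence $\fo f$ is $H_n$-invariant and lies in $\sn$.

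For the second inclusion $\fo\sn\subseteq\cn$, I would take $f\in\sn$ of compact support (the general case follows from density of such functions in $\sn\cap L^2(K)$ together with continuity of $\fo$), partition $K$ into cosets $a+H_n$ on each of which $f$ is constant, and write
\[
(\fo f)(\xi)=\sum_a f(a)\chi(-a\xi)\int_{H_n}\chi(-h\xi)\,dh.
\]
The inner integral is that of the character $h\mapsto\chi(-h\xi)$ over the compact group $H_n=\beta^nO$; it equals $\mu(H_n)$ when that character is trivial on $H_n$ and $0$ otherwise. Triviality is equivalent, by rank zero of $\chi$, to $\beta^n\xi O\subseteq O$, i.e.\ $\xi\in\bn$. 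So $(\fo f)(\xi)=0$ for $\xi\notin\bn$, and $\fo f\in\cn$.

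Running the same two arguments with $\foi$ in place of $\fo$ (whose defining character $(x,\xi)\mapsto\chi(x\xi)$ has the same rank zero) yields $\foi\cn\subseteq\sn$ and $\foi\sn\subseteq\cn$; applying $\fo$ produces the reverse inclusions $\cn\subseteq\fo\sn$ and $\sn\subseteq\fo\cn$, closing both equalities. Finally, unitarity of $\fo$ promotes $\fo\cn=\sn$ to $\fo(\cn^\perp)=\sn^\perp$, so decomposing $f\in L^2(K)$ as $f=C_nf+(I-C_n)f$ gives $S_n\fo f=\fo C_nf$; the other two commutation identities are identical. The only mildly technical point in the whole proof is the interchange of sum and integral in the second computation, which is transparent for compactly supported $f$ and passes to general $f\in\sn$ by the density argument just indicated.
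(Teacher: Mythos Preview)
Your proof is correct and follows the same overall architecture as the paper's, but the argument for the inclusion $\fo\sn\subseteq\cn$ is genuinely different. The paper avoids any partition into cosets or density reduction: for $f\in\sn$ and $\xi$ with $(\fo f)(\xi)\neq 0$, it exploits the $H_n$-invariance of $f$ via the substitution $x\mapsto x-h$ to obtain $(\fo f)(\xi)=\chi(h\xi)(\fo f)(\xi)$ for every $h\in B_{-n}$, forcing $\chi\equiv 1$ on $\xi B_{-n}$ and hence $|\xi|\leq q^n$ by rank zero. This works directly for all $f\in\sn\cap L^2(K)$ without first cutting down to compact support. Your route---decomposing $f$ into coset indicators and evaluating $\int_{H_n}\chi(-h\xi)\,dh$ via character orthogonality---is equally valid and perhaps more explicit about \emph{why} the Fourier transform vanishes outside $\bn$, at the cost of the small density step you flagged. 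For the commutation relations, your orthogonal-decomposition argument ($\fo\cn=\sn$ plus unitarity gives $\fo(\cn^\perp)=\sn^\perp$, hence $S_n\fo=\fo C_n$) is a cleaner packaging of what the paper does by manipulating the identities $S_n\fo C_n=\fo C_n$ and their adjoints; the content is the same.
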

\begin{proof}
Let $f\in\mathcal{C}_n$ and take any $h\in H_n$. Then 
\begin{align*}
(\mathcal F f)(\xi + h)&=\int_K f(x)\chi(-x(\xi +h))dx=\int_{B_n} f(x)\chi(-x\xi)\chi(-xh)dx\\
&=\int_{B_n} f(x)\chi(-x\xi)dx=\int_K f(x)\chi(-x\xi)dx\\
&=(\mathcal F f)(\xi)
\end{align*}
since $|xh|\leq q^n\cdot q^{-n}=1$ and $\chi$ has rank zero. This proves $\fo\cn\subset\sn$.\\
Next let $f\in\mathcal S_n$ and assume $(\mathcal Ff)(\xi)\neq0$, $|\xi|=q^m$. We must show that $m\leq n$. For any $h$ with $|h|\leq q^{-n}$ we have
\begin{align*}
(\mathcal F f)(\xi)&=\int_K f(x)\chi(-x\xi)dx=\int_K f(x+h)\chi(-x\xi)dx\\
&=\int_{K} f(x)\chi(-(x-h)\xi)dx=\int_K f(x)\chi(-x\xi)\chi(h\xi)dx\\
&=(\mathcal F f)(\xi)\chi(h\xi)\\
\intertext{which, since $(\mathcal Ff)(\xi)\neq0$, gives}
\chi(h\xi)&=1 \text{ for all $h\in B_{-n}\,.$}
\end{align*}
This means that $\chi$ is identically equal to $1$ on the ball $\xi\cdot B_{-n}=B_{-n+m}$, and since $\chi$ has rank zero, we must have $-n+m\leq0$, i.e., $m\leq n$. This proves $\fo\sn\subset\cn$. Since obviously the same relations hold with $\foi$ in place of $\fo$, we have equalities everywhere, i.e., $\fo\mathcal C_n=\mathcal S_{n}$, $\fo\mathcal S_n=\mathcal C_{n}$, and hence $\fo\mathcal D_n=\mathcal D_n$.

As for the commutation relations: 
The relations just proved\ -- and the same ones with  $\foi=\fo^*$ instead of  $\fo$ -- imply that $S_n\fo C_n=\fo C_n$, $C_n\fo S_n=\fo S_n$, $S_n\fo^*C_n=\fo^*C_n$, $C_n\fo^*S_n=\fo^*S_n$. Taking adjoints and combining, we get $\fo C_n=S_n\fo$ and $\fo S_n=C_n\fo$. Multiplying  $\fo C_n=S_n\fo$ by $C_n$ on the left and multiplying $\fo S_n=C_n\fo$ by $C_n$ on the right gives $C_n\fo C_n=C_nS_n\fo=\fo S_nC_n$, i.e.,  
$\fo D_n=D_n\fo.$ 
\end{proof}


\subsection{Fourier transform at the finite level}
We need to establish a relation between the Fourier transforms on $K$ and $G_n$. 

So let as before $\chi$ be a rank zero character on $K$ and let $\fo$ be the associated Fourier transform. Like any additive character on a field, $\chi$ gives rise to a symmetric bi-character $\mathcal X$
on $K$ by setting $\mathcal X(x,y)=\chi(xy)$. It descends to a bi-character on $G_n=B_n/B_{-n}$, since if $x'=x+h$, $y'=y+k$ with $h,k\in B_{-n}$, then 
$\mathcal X(x+h,y+k)=\chi((x+h)(y+k))=\chi(xy)\chi(xk)\chi(hy)\chi(hk)=\chi(xy)=\mathcal X(x,y)$ (the arguments in the last three factors of the product all have absolute value $\leq1$). So we can define a bi-character $\mathcal X_n$ on $G_n$ by setting $\mathcal{X}_n([x],[y])=\chi(xy)$. Since $\mathcal X$ is non-degenerate on $K$, so is $\mathcal X_n$ on $G_n$. Indeed, if $x\in B_n$, $|x|=q^m$, and 
$\mathcal{X}_n([x],[y])=\chi(xy)=1$ for all $y\in B_n$, then $\chi=1$ on the ball $x\cdot B_n=B_{n+m}$, which implies $B_{n+m}\subset B_0=O$ since $\chi$ has rank $0$. But this means that $m+n\leq0$, i.e., $m\leq -n$, and so $x\in B_{-n}$, i.e., $x=0$ as an element of $G_n=B_n/B_{-n}$.\\    Setting $\chi_{n,[y]}([x])=\mathcal{X}_n([x],[y])$, it follows that the characters $\chi_{n,[y]}$ exhaust all of  
$\hat{G_n}$ as $[y]$ runs through $G_n$, i.e., the bi-character $\mathcal X_n$ implements the self-duality of the finite abelian group $G_n$. The canonical choice for an $L^2$-isometric Fourier transform on $G_n$ is then given by (recall that $G_n$ has $q^{2n}$ elements):
\begin{align}\label{ftn}
\begin{split}
(\mathcal{F}_nf)([x])&=\frac{1}{\sqrt{|G_n|}}\sum_{[y]\in G_n}f([y])\mathcal{X}_n(-[x],[y])\\&=\qmn\sum_{[y]\in G_n}f([y])\chi(-xy),\quad[x]\in G_n,\quad f\in L^2(G_n)\,,
\end{split}
\intertext{or, in terms of the set of representatives $X_n$,}
\begin{split}
(\mathcal{F}_nf)(x)&=\qmn\sum_{y\in X_n}f(y)\mathcal{X}_n(-x,y)\\
&=\qmn\sum_{y\in X_n}f(y)\chi(-xy),\quad x\in X_n,\quad f\in L^2( X_n)\,.
\end{split}
\end{align}
The following result is now more or less obvious, but we state it as a proposition because of its importance. It plays a crucial role in the proof of the main convergence theorems, and simplifies matters considerably compared to the situation over $\rr$, where the relation between the finite and infinite Fourier transform was much more complicated (see \cite[p.\ 626--627]{DVV94}).
\begin{proposition} Let the Fourier transforms $\fo$ and $\fo_n$ be as above.
Then $\fo$ leaves the space $\dn$ $\left(\simeq L^2(G_n)\right)$ invariant, and
\begin{equation}
\fo\vert_{\dn}=\fo_n, \text{ i.e., } \fo_n=\fo D_n=D_n\fo\,.
\end{equation}
\end{proposition}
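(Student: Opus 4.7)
Invariance of $\mathcal{D}_n$ under $\mathcal{F}$ was already established in Proposition~\ref{propcommute}, so the remaining content of the statement is the identity $\mathcal{F}|_{\mathcal{D}_n} = \mathcal{F}_n$. My plan is to verify this on the natural orthogonal basis of $L^2(G_n)$ consisting of the point-mass indicators $e_{[x_0]} = \mathbf{1}_{\{[x_0]\}}$, $[x_0]\in G_n$. Under the isometric imbedding $L^2(G_n)\hookrightarrow L^2(K)$, $e_{[x_0]}$ corresponds to $\mathbf{1}_{x_0+H_n}\in \mathcal{D}_n$ (both have norm $q^{-n/2}$), and more generally a function $g$ on $G_n$ corresponds to the step function on $K$ taking the value $g([x])$ on each coset $x+H_n$.

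On the $G_n$ side, the definition \eqref{ftn} gives immediately
\[(\mathcal{F}_n e_{[x_0]})([x]) = q^{-n}\chi(-xx_0).\]
On the $K$ side, I would compute $\mathcal{F}\mathbf{1}_{x_0+H_n}$ directly by the substitution $x=x_0+y$ with $y\in H_n$, obtaining
\[(\mathcal{F}\mathbf{1}_{x_0+H_n})(\xi) = \chi(-x_0\xi)\int_{H_n}\chi(-y\xi)\,dy.\]
The key point is then that this inner integral equals $q^{-n}\mathbf{1}_{B_n}(\xi)$: for $\xi\in B_n$ one has $y\xi\in O$, so $\chi(-y\xi)\equiv 1$ by rank zero and the integral reduces to $\mu(H_n)=q^{-n}$; for $|\xi|=q^m$ with $m>n$, the map $y\mapsto -y\xi$ sends $H_n$ onto the ball $B_{m-n}$, on which $\chi$ is non-trivial (again by rank zero), so the standard translation argument forces the integral to vanish. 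Hence $(\mathcal{F}\mathbf{1}_{x_0+H_n})(\xi) = q^{-n}\chi(-x_0\xi)\mathbf{1}_{B_n}(\xi)$, which in particular lies in $\mathcal{D}_n$.

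Finally I would check that this function, pulled back via the isometric imbedding, coincides with $\mathcal{F}_n e_{[x_0]}$. For $x\in X_n$ and any $\xi = x+h\in x+H_n$ one has $\chi(-x_0\xi)=\chi(-x_0 x)\chi(-x_0 h)$, and $\chi(-x_0 h)=1$ because $|x_0 h|\leq q^n\cdot q^{-n}=1$ and $\chi$ has rank zero. Thus $\mathcal{F}\mathbf{1}_{x_0+H_n}$ is constant on each $H_n$-coset inside $B_n$, with value $q^{-n}\chi(-x_0 x)=(\mathcal{F}_n e_{[x_0]})([x])$ on the coset $x+H_n$, as required. The only mildly delicate step is the character integral over $H_n$; everything else is bookkeeping to align the two isometric pictures of the same operator.
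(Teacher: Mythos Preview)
Your proof is correct and follows essentially the same approach as the paper: both rely on Proposition~\ref{propcommute} for invariance and then verify $\mathcal{F}|_{\mathcal{D}_n}=\mathcal{F}_n$ by direct computation of the Fourier integral, using that $\chi$ has rank zero to see that $y\mapsto\chi(-xy)$ is constant on $H_n$-cosets when $x\in B_n$. The only cosmetic difference is that the paper treats a general $f\in\mathcal{D}_n$ at once (splitting $\int_{B_n}$ into coset integrals), whereas you work basis element by basis element and re-derive the vanishing outside $B_n$ via the character integral---extra work that is already contained in Proposition~\ref{propcommute}.
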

\begin{proof}
The first part of the statement has already been proved (Proposition~\ref{propcommute}). For the second part, take any $f\in\dn$ and let $x\in\bn$. Then:
\begin{align}
(\fo f)(x)&=\int_K f(y)\chi(-xy)\,dy\stackrel{f\in\cn}{=}\int_{\bn} f(y)\chi(-xy)\,dy\\
&=\sum_{z\in X_n}\int_{z+H_n}f(y)\chi(-xy)\,dy\stackrel{(*)}{=}\sum_{z\in X_n}f(z)\chi(-xz)\qmn\\
&=(\fo_nf)(x)\,,
\end{align}
where the equality $(*)$ follows from the fact that the function $y\to f(y)\chi(-xy)$ is constant on $z+H_n$ (since $x\in\bn$) and $\mu(z+H_n)=\qmn$. 
\end{proof}
 
\subsection{Dynamical operators at the finite level}

For the finite versions of the dynamic operators we take their compressions by $D_n$, i.e., $V_n=D_nVD_n$, $Q_n=D_nQD_n$, $P_n=D_nPD_n=D_n\foi Q\fo D_n=\foi Q_n\fo=\foi_n Q_n\fo_n$. Before computing what these operators do to an $f\in L^2(G_n)$, let us find out what the projection $S_n$ does to a radial function $v(x)=w(|x|)$:
\begin{align*}
(S_nv)(x)&=\frac{1}{\mu(H_n)}\int_{H_n}v(x+h)\,dh=\frac{1}{\mu(H_n)}\int_{H_n}w(|x+h|)\,dh\\
&=\ave(v,n,x)\stackrel{(*)}{=}\begin{cases}v(x),&|x|>\qmn\\
\ave(v,n,0),&|x|\leq \qmn\,,
\end{cases}
\end{align*}
where again $\ave(v,n,x)$ means the average value of $v$ over $x+H_n$, and where ultrametricity was used in the equality $(*)$. 

Next we compute the effect of the finite operators on an $f\in L^2(G_n)$. For $V_n$ we get, remembering that $V$ is multiplication by a radial function $v$:
\begin{align*}
(V_nf)(x)&=(D_nVD_nf)(x)=(C_nS_nVf)(x)=\one_{B_n}(x)(S_nVf)(x)\\&=\one_{B_n}(x)\frac{1}{\mu(H_n)}\int_{H_n}(Vf)(x+h)\,dh\\&=\one_{B_n}(x)\frac{1}{\mu(H_n)}\int_{H_n}v(x+h) f(x+h)\,dh\\&=\one_{B_n}(x)\frac{1}{\mu(H_n)}[\int_{H_n}v(x+h)\,dh]f(x)\\&=\ave(v,n,x)f(x)=\begin{cases}v(x)f(x),&|x|>\qmn\\
\ave(v,n,0)f(0),&|x|\leq\qmn
\end{cases}\\&=
\begin{cases}(Vf)(x),&|x|>\qmn\\
\ave(v,n,0)f(0),&|x|\leq\qmn
\end{cases}\quad\left(f\in L^2(G_n)\right)\,.
\end{align*}
In particular, for the operator $Q_n$ this gives, writing $q(x)=|x|$:
\begin{align*}(Q_nf)(x)&=\ave(q,n,x)f(x)=\begin{cases}|x|f(x),&|x|>\qmn\\\ave(q,n,0)f(0),&|x|\leq\qmn\end{cases}\\
&=\begin{cases}(Qf)(x),&|x|>\qmn\\\ave(q,n,0)f(0),&|x|\leq\qmn\end{cases}\quad\left(f\in L^2(G_n)\right)\,.
\end{align*}
For $P_n$ we get
\begin{align*}
P_nf&=D_nPD_nf=C_nS_nPf=C_nS_n\foi Q \fo f=C_n\foi C_n Q \fo f=C_n\foi Q \fo f
\\&=C_n Pf\quad\left(f\in L^2(G_n)\right).
\end{align*}
We now set $H_n=P_n^\alpha+V_n$, the Hamiltonian for the finite model, and aim to show that the analog of Theorem~4 in \cite{DVV94} holds in the present setting.

\section{Convergence of the finite models}\label{main} Keep the notation and assumptions of the previous section. 

There are two main steps to proving the analog of Theorem~4 of \cite{DVV94}: Establishing the convergence $H_n\to H$ in the strong resolvent sense, and proving a form of uniform compactness for the resolvents $(I+H_n)^{-1}$. The proofs follow a pattern similar to that of \cite{DVV94}, but we are able to simplify some of the arguments, partly due to the non-Archimedean nature of $K$.

As for strong convergence of the resolvents: According to \cite{MR1848777}, Section~3.2, the space $\mathcal D$ of locally constant functions with compact support is a core for the Hamiltonian $H=P^\alpha+V$. Hence it is a common core for all the Hamiltonians $H_n$ ($n\ge 1$) and $H$. For $f\in\mathcal D$ we have $f\in \dn$ for large $n$, hence $\lim_n Q_n^\alpha f=\lim_n D_nQ^\alpha f=Q^\alpha f$ in the strong operator topology; further: $P_n^\alpha f=\foi_n Q_n^\alpha\fo_n f=\foi Q_n^\alpha\fo f\to  \foi Q^\alpha\fo f=P^\alpha f$, \footnote{Proving the limit $P_n^\alpha f\to P^\alpha f$ required considerable effort in \cite{DVV94}, due to the fact that the Fourier transforms at the finite and infinite level did not match up nicely. Here the finite Fourier transform is simply the restriction of the infinite one, and the limit becomes a triviality.} and $V_nf\to Vf$. Thus $H_nf=(P_n^\alpha+V_n)f\to (P^\alpha+V)f=H f$ for all $f\in\mathcal D$. Here we have used the obvious fact that $C_n\to I$, and hence $S_n=\fo C_n\foi\to I$ and $D_n=C_nS_n=S_nC_n\to I$, in the strong operator topology. 

By Theorem VIII.25 of \cite{MR751959} it now follows that $H_n\to H$ in the strong resolvent sense.

The compactness of the resolvent $(I+H)^{-1}$ follows by classical arguments (see, e.g., \cite[p.~623]{DVV94} for the case $L^2(\rd)$; the same proof works for $L^2(K)$).  For the resolvents $(I+H_n)^{-1}$ we need a form of uniform compactness which is formulated as follows: 
\begin{definition}[Uniform compactness]\label{uc}
A sequence of bounded operators $(M_n)$ on a Hilbert space $\mathcal{H}$ is said to satisfy a condition of \emph{uniform compactness} if the following conditions hold:
\begin{enumerate}
\item The sequence $(M_n)$ is uniformly bounded.
\item There are subspaces $L_n$ with $L_n$ invariant under $M_n$ such that for every sequence $(g_n)$ with $g_n\in L_n$ and $||{g_n}||\leq1$, the sequence 
$(M_ng_n)$ is relatively compact in $\mathcal H$.
\end{enumerate}
\end{definition}
\begin{remark}
Notice that the individual operators $M_n$ are not required to be compact on $\mathcal H$ (and in our applications they will not be). Still, if the above conditions are fulfilled, we will say that the sequence $(M_n)$ is uniformly compact, even if the individual $M_n$ are not compact.
\end{remark}
For our purposes the usefulness of uniform compactness lies in the following two results. They give a strong connection between the spectral data of the operators in an approximating sequence $(M_n)$ and their strong limit $M$. 
\begin{lemma}\label{acc}
Let $M_n$, $L_n$ be as in Definition~\ref{uc}, and assume that the sequence $M_n$ converges strongly to a bounded operator $M$. Assume further that there are eigenvectors $g_n$ and corresponding eigenvalues $\gl_n$ such that $g_n\in L_n$, $||g_n||=1$ and $M_n g_n=\gl_n g_n$. Then any non-zero cluster point $\gl_0$ of the sequence $(\gl_n)$ is an eigenvalue of $M$, and there is a subsequence of $(g_n)$ which converges to a vector $g$ such that $Mg=\gl_0 g$. 
\end{lemma}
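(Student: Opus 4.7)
The plan is to extract a convergent subsequence of $(g_n)$ by exploiting uniform compactness on the eigenvalue equation $M_ng_n=\lambda_ng_n$, and then to pass to the limit to obtain an eigenvector of $M$ with eigenvalue $\lambda_0$.

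First, I would pass to a subsequence (which I re-index by $n$ for brevity) along which $\lambda_n\to\lambda_0$. Since $g_n\in L_n$ and $\|g_n\|=1$, the uniform compactness hypothesis produces a further subsequence along which $M_ng_n$ converges, say to some $h\in\mathcal H$. But $M_ng_n=\lambda_ng_n$, so $\lambda_ng_n\to h$, and because $\lambda_n\to\lambda_0\ne 0$, I can solve for $g_n$ and conclude that $g_n\to g:=\lambda_0^{-1}h$ along this subsequence. This is exactly where the assumption $\lambda_0\ne 0$ is essential: it lets us transfer norm convergence from $M_ng_n$ back to $g_n$ itself, and thereby prevents the candidate limit from collapsing to zero. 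Indeed, since $\|g_n\|=1$ and $g_n\to g$ in norm, continuity of the norm yields $\|g\|=1$, so $g\ne 0$.

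Next, to verify $Mg=\lambda_0 g$, I would combine two limits. On one hand, $M_ng_n=\lambda_ng_n\to\lambda_0 g$ by construction. On the other hand, the standard fact that strong convergence $M_n\to M$, together with the uniform boundedness $\sup_n\|M_n\|<\infty$ (which is part of Definition~\ref{uc}) and norm convergence $g_n\to g$, implies $M_ng_n\to Mg$, via the triangle-inequality estimate
\[\|M_ng_n-Mg\|\le\|M_n\|\,\|g_n-g\|+\|M_ng-Mg\|\to 0.\]
Equating the two limits gives $Mg=\lambda_0 g$, so $\lambda_0$ is an eigenvalue of $M$ with eigenvector $g$, as required.

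The main obstacle — and the reason the hypothesis $\lambda_0\ne 0$ is imposed — is ensuring that the limit $g$ is nonzero. Without a lower bound on $|\lambda_n|$ in the limit, one cannot invert the scalars $\lambda_n$ to pass from norm convergence of $(M_ng_n)$ to that of $(g_n)$, and the subsequence produced by uniform compactness would only give $M_ng_n\to h$ for some $h$ that might be zero, producing the empty identity $0=M\cdot 0$. Once this point is handled, every remaining step is a routine strong-convergence and triangle-inequality argument.
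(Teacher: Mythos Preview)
Your proof is correct and follows essentially the same route as the paper's: pass to a subsequence with $\lambda_n\to\lambda_0$, use uniform compactness to get convergence of $M_ng_n$, divide by $\lambda_n$ to obtain $g_n\to g$, and then use the same triangle-inequality estimate $\|M_ng_n-Mg\|\le\|M_n\|\,\|g_n-g\|+\|M_ng-Mg\|$ to conclude $Mg=\lambda_0 g$. Your version is in fact slightly more careful in making explicit that $\|g\|=1$, hence $g\ne 0$, which the paper leaves implicit.
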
 
\begin{proof}
By uniform boundedness, all the $\gl_n$ are confined to a bounded set. Hence there is a subsequence of $(\gl_n)$ (still written $(\gl_n)$ after re-indexing) which converges to a scalar $\gl_0$, say, with $\gl_0\neq0$. By uniform compactness, $M_ng_n$ has a convergent subsequence (again written $M_ng_n$ after re-indexing). It follows that the sequence $g_n=\frac{1}{\gl_n}M_ng_n$ converges to an element $g$, say. Since $M_n\to M$ strongly, it follows that $M_ng_n\to Mg$;  indeed, remembering that the $M_n$ are uniformly bounded: $\norm{M_ng_n-Mg}\leq\norm{M_n}\,\norm{g_n-g}+\norm{M_ng-Mg}\to 0$. So altogether we have: $g=\lim g_n=\lim \frac{1}{\gl_n}M_ng_n=\frac{1}{\gl_0}Mg$, i.e., $Mg=\gl_0g$.
\end{proof}
\emph{Notation:} We let $\sigma_p(A)$ denote the set of \emph{positive} eigenvalues of an operator $A$. Further, for a self-adjoint $A$ we let $P^A$ denote the projection valued measure of $A$, and for a projection $E$ we let $r(E)$ denote its range. 
\begin{proposition}[Cfr.\ Lemma 3 in \cite{DVV94}]\label{mainprop}
Keep the notation and assumptions of the previous lemma. In addition, assume the following: (i) The operators $M_n,M$ are self-adjoint, and $0\leq M,M_n\leq I$, (ii) $M$ is compact on $\mathcal H$, and $M_n$ is compact on $L_n$. Then the following hold:
\begin{enumerate}
\item If $J$ is a compact subset of $(0,1]$ with $J\cap\sigma_p(M)=\emptyset$, then $J\cap\sigma_p(M_n)=\emptyset$ for large $n$.
\item If $\gl\in\sigma_p(M)$, there exists a sequence $(\gl_n)$ with $\gl_n\in\sigma(M_n)$ such that $\gl_n\to\gl$. Further, if $J$ is a compact neighborhood of an eigenvalue $\gl\in\sigma_p(M)$, not containing any other eigenvalues of $M$, then any sequence $(\gl_n)$ with $\gl_n\in\sigma_p(M_n)\cap J$ converges to $\gl$.
\item Let $\gl$ and $J$ be as in (2). Then $\dim P^{M_n}(J) = \dim P^M(J)$ for large $n$, and for each orthonormal basis $\{e_1,\dots,e_m\}$ for $r\left(P^{M}(J)\right)$ there is, for each $n$, an orthonormal basis $\{e_1^n,\dots,e^n_m\}$ for $r\left(P^{M_n}(J)\right)$ such that $\lim_{n\to\infty}e^n_i=e_i$, $i=1,\dots,m$.  
\end{enumerate}
\end{proposition}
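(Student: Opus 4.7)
My plan is to prove (1), (2), (3) in order, leaning on Lemma~\ref{acc} throughout and concentrating the real work in (3).

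For (1), I would argue by contradiction. Suppose, along some subsequence, there are $\lambda_n\in J\cap\sigma_p(M_n)$. Since $M_n$ is self-adjoint and $L_n$ is invariant under $M_n$, so is $L_n^\perp$, and any eigenvector of $M_n$ for $\lambda_n$ splits into an $L_n$-component and an $L_n^\perp$-component each of which is itself an eigenvector; hence, provided $J$ avoids the spectrum of $M_n|_{L_n^\perp}$ (which in the intended application is $\{1\}$, so one takes $J\subset(0,1)$), we can find unit eigenvectors $g_n\in L_n$. By compactness of $J\subset(0,1]$, a further subsequence gives $\lambda_n\to\lambda_0>0$, and Lemma~\ref{acc} produces $g\in\mathcal H$ with $Mg=\lambda_0 g$, contradicting $J\cap\sigma_p(M)=\emptyset$.

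For the first half of (2), I would use the elementary self-adjoint resolvent bound. If $d(\lambda,\sigma(M_n))\geq\delta>0$ along a subsequence, then $\norm{(M_n-\lambda)^{-1}}\leq1/\delta$. Applied to a unit eigenvector $e$ with $Me=\lambda e$, this gives
\[1=\norm{e}\leq\frac{1}{\delta}\norm{(M_n-\lambda)e}=\frac{1}{\delta}\norm{(M_n-M)e}\to 0\,,\]
a contradiction, so $d(\lambda,\sigma(M_n))\to 0$ and any $\lambda_n\in\sigma(M_n)$ realizing the distance tends to $\lambda$. For the second half, a sequence $\lambda_n\in\sigma_p(M_n)\cap J$ lies in the compact set $J$, so has convergent subsequences, and any limit $\lambda'\in J\cap(0,1]$ is, by Lemma~\ref{acc}, an element of $\sigma_p(M)\cap J=\{\lambda\}$; hence every subsequential limit equals $\lambda$, and the full sequence converges to $\lambda$.

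The main obstacle is (3). First I would shrink $J$ so that $\partial J\cap\sigma(M)=\emptyset$, possible because $\lambda$ is an isolated point of $\sigma(M)$ (the compact operator $M$ has discrete spectrum away from $0$). Part (1) then forces $\sigma(M_n)\cap J$ to concentrate in a small neighborhood of $\lambda$ for large $n$, so $\mathrm{dist}(\lambda,\sigma(M_n)\setminus J)$ is bounded below by a positive constant. For $\dim r(P^{M_n}(J))\geq m$, I use the spectral-theorem estimate (exploiting that $M_n$ commutes with $P^{M_n}(J)$)
\[\mathrm{dist}(\lambda,\sigma(M_n)\setminus J)\cdot\norm{(I-P^{M_n}(J))e_i}\leq\norm{(M_n-\lambda)e_i}=\norm{(M_n-M)e_i}\to 0\,,\]
which yields $P^{M_n}(J)e_i\to e_i$; the Gram matrix of $\{P^{M_n}(J)e_i\}_{i=1}^m$ then converges to the identity, so these vectors are linearly independent for large $n$. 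For the reverse inequality, suppose along a subsequence one can find an $(m+1)$-st orthonormal eigenvector $v^n_{m+1}\in L_n$ of $M_n$ with eigenvalue in $J$, adjoined to $\{v_1^n,\ldots,v_m^n\}$ where the $v_i^n$ approximate the $e_i$; uniform compactness and a diagonal extraction yield a subsequence on which all $v_i^n\to v_i$, an orthonormal $(m+1)$-tuple, and by part (2) the associated eigenvalues all converge to $\lambda$, so $Mv_i=\lambda v_i$, contradicting $\dim r(P^M(\{\lambda\}))=m$. Once the dimensions agree, the approximating ONB is obtained by a standard Kato-type construction: $T_n:=P^{M_n}(J)|_{r(P^M(J))}\colon r(P^M(J))\to r(P^{M_n}(J))$ converges in norm to the identity on $r(P^M(J))$, hence is invertible for large $n$, and its polar decomposition $T_n=U_n|T_n|$ yields unitaries $U_n\to I$ so that $e^n_i:=U_ne_i$ is an ONB of $r(P^{M_n}(J))$ with $e^n_i\to e_i$.
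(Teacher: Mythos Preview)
Your argument is correct and tracks the paper's proof in outline, but you make different tactical choices at several points, and you are more careful in one place where the paper is silent.

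For (1) and the second half of (2), your argument is identical to the paper's: a contradiction via Lemma~\ref{acc}. Your extra remark about needing eigenvectors in $L_n$ (and hence needing $J$ to avoid the spectrum of $M_n|_{L_n^\perp}$) is a genuine point the paper glosses over; in the application $M_n|_{L_n^\perp}=I$, so the issue only arises at the endpoint $1$, which is harmless after spectral mapping, but you are right to flag it.

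For the first half of (2) and for the inequality $\dim P^{M_n}(J)\geq m$ in (3), the paper simply cites Reed--Simon, Thm.\ VIII.24 (strong convergence $M_n\to M$ implies strong convergence of spectral projections over intervals whose endpoints avoid $\sigma(M)$, hence $E_n e_i\to e_i$). You instead prove these directly: the resolvent bound $\norm{(M_n-\lambda)^{-1}}\le 1/d(\lambda,\sigma(M_n))$ for (2), and the spectral inequality $\mathrm{dist}(\lambda,\sigma(M_n)\setminus J)\,\norm{(I-E_n)e_i}\le\norm{(M_n-\lambda)e_i}$ for (3). This is the same content unpacked; your version is self-contained, the paper's is shorter. (Incidentally, your invocation of part~(1) here is not strictly needed: the lower bound $\mathrm{dist}(\lambda,\sigma(M_n)\setminus J)\ge\mathrm{dist}(\lambda,J^c)>0$ is automatic from $\lambda\in\mathrm{int}(J)$.)

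For the upper bound $\dim P^{M_n}(J)\le m$ in (3), you and the paper give the same diagonal-extraction argument.

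The one genuine methodological difference is the construction of the approximating orthonormal basis. The paper applies Gram--Schmidt to $\{E_n e_1,\dots,E_n e_m\}$ (first verifying linear independence via the inner-product estimate $|\langle E_n e_j,e_i\rangle-\delta_{ij}|<\epsilon$) and then appeals to an ``elementary but tedious calculation'' to conclude $e_i^n\to e_i$. You instead take the polar decomposition $T_n=U_n|T_n|$ of $T_n=E_n|_{r(E)}$ and set $e_i^n=U_n e_i$; since $T_n\to I_{r(E)}$ in norm, $U_n\to I$ and the convergence is immediate. Your Kato-style construction is cleaner and avoids the tedious verification, at the cost of invoking the (finite-dimensional) polar decomposition; the paper's Gram--Schmidt is more elementary but leaves the last step to the reader.
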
 
\begin{proof}
(1)  If $J\cap\sigma_p(M_n)\neq\emptyset$ for arbitrarily large $n$, there are infinitely many $\gl_n$ in $J$. The sequence $(\gl_n)$ thus has a cluster point in $J$, and hence, by the previous lemma, $M$ has an eigenvalue in $J$.

(2) The first part follows from the fact that $M_n\to M$ strongly \cite[Thm.\ VIII.24, Vol.\ 1]{MR751959}. Now let $(\gl_{n_k})$ be all the eigenvalues of the various $M_n$ which lie in $J$, indexed in an arbitrary fashion. Then $(\gl_{n_k})$ has a cluster point in $J$, which by the previous lemma is an eigenvalue of $M$. Since $M$ has exactly one eigenvalue in $J$, it follows that the sequence $(\gl_{n_k})$ has exactly one cluster point in $J$, i.e., $(\gl_{n_k})$ converges to $\gl$.

(3) For ease of notation set $E_n=P^{M_n}(J)$ and $E=P^{M}(J)$. We first prove that $\dim E_n \leq \dim E$ for large $n$. Assume otherwise, and set $m=\dim E$. Then there exists a subsequence $E_{n_k}$ of $E_n$ such that $\dim E_{n_k}>m$ for all $k$. For each $k$, choose $m+1$ orthonormal eigenvectors $e^k_1,\dots,e^k_{m+1}$ for $r\left(E_{n_k}\right)$. By uniform compactness there is a subsequence of $(e^k_1)$ which converges to an eigenvector for $M$. Repeating the process for each of the remaining eigenvectors, we obtain a set of $m+1$ orthonormal eigenvectors for $M$, a contradiction. This proves $\dim E_n \leq \dim E$ for large $n$. The converse inequality $\dim E_n \geq \dim E$ follows from \cite[Thm.\ VIII.24, Vol.\ 1]{MR751959}: Since $M_n\to M$ strongly, then $E_n\to E$ strongly. For finite dimensional projections this implies $\dim E_n\geq\dim E$ for large $n$.

For the last statement take any orthonormal basis $\{e_1,\dots,e_m\}$ for $r(E)$. Let us first show  that the set $\{E_ne_1,\dots,E_ne_m\}$ is linearly independent for large 
$n$. Assume to the contrary that it is linearly dependent for arbitrarily large $n$, and let $1>\epsilon>0$ be given. By strong convergence there is an $n_0$ such that $\norm{E_n e_j-e_j}<\epsilon$ for $n\geq n_0$, $j=1\dots m$. Pick an $n>n_0$ such that the set $\{E_ne_1,\dots,E_ne_m\}$ is linearly dependent. From a linear dependence relation for this set, pick the term with the largest coefficient -- $E_ne_i$, say -- and solve for it. Then we have
\[E_ne_i=\sum_{j=1,j\neq i}^m\alpha_j E_n e_j\]
with $|\alpha_j|\leq 1$. Take the inner product with $e_i$ on both sides to get
\begin{align}
\langle E_ne_i,e_i\rangle&=\sum_{j=1,j\neq i}^m\alpha_j 
\langle E_n e_j,e_i\rangle\,,\\
\intertext{which gives}
\langle E_ne_i-e_i,e_i\rangle+1&=\sum_{j=1,j\neq i}^m\alpha_j \langle E_n e_j-e_j,e_i
\rangle.
\end{align}
For the left hand side we have $|\langle E_ne_i-e_i,e_i\rangle+1|\geq 1-\epsilon$, and for the right hand side: $|\sum_{j=1,j\neq i}^m\alpha_j \langle E_n e_j-e_j,e_i\rangle|\leq (m-1)\epsilon$. For $\epsilon<1/m$ this gives a contradiction. Hence the set $\{E_ne_1,\dots,E_ne_m\}$ is linearly independent for large $n$.
Now perform a Gram-Schmidt orthonormalization on this set to obtain an orthonormal basis $\{e_1^n,\dots,e_m^n\}$ for $r(E_n)$. An elementary, but somewhat tedious, calculation then shows that $\lim_{n\to\infty}e^n_i=e_i$, $i=1,\dots,m$. 
\end{proof}

We are now ready to prove a key result, namely that the sequence $(I+H_n)^{-1}$ is uniformly compact in the sense of Definition~\ref{uc} (see Proposition~\ref{propuc}). This will pave the way for establishing our main result (Theorem~\ref{mainthm}). To prove uniform compactness we will use the following version of the Kolmogorov-Riesz compactness criterion; it is proved for the case $L^2(\rd)$ in \cite[Corollary 7]{MR2734454}, and the same proof works for $L^2(K)$:
\begin{proposition}\label{KR}
Let $F$ be a subset of $L^2(K)$. Then $F$ is relatively compact if the following conditions are fulfilled:
\begin{enumerate}
\item $\sup_{f\in F}\twonorm{f}\leq C$ for some positive constant $C$.
\item $\lim_{r\to\infty}\sup_{f\in F}\int_{|x|\geq r}|f(x)|^2\,dx=0$.
\item $\lim_{\rho\to\infty}\sup_{f\in F}\int_{|\xi|\geq \rho}|\mathcal F f(\xi)|^2\,d\xi=0$.
\end{enumerate}
\end{proposition}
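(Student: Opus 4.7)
My plan is to prove Proposition~\ref{KR} by showing that $F$ is totally bounded in $L^2(K)$; since $L^2(K)$ is complete, this yields relative compactness. The key idea, which is cleaner in the local field setting than over $\mathbf R^d$ due to the ultrametric structure, is to approximate each $f\in F$ uniformly by its projection $D_nf$ onto the finite-dimensional subspace $\dn\simeq L^2(G_n)$, and then exploit boundedness of $D_nF$ in this finite-dimensional space.

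The central claim to establish is that $\sup_{f\in F}\twonorm{f-D_nf}\to 0$ as $n\to\infty$. Using $D_n=C_nS_n$, I would write
\[
I-D_n=(I-C_n)+C_n(I-S_n),
\]
so that
\[
\twonorm{f-D_nf}\leq\twonorm{(I-C_n)f}+\twonorm{(I-S_n)f}.
\]
The first term equals $\bigl(\int_{|x|>q^n}|f(x)|^2\,dx\bigr)^{1/2}$, which tends to $0$ uniformly in $f\in F$ by hypothesis~(2) applied with $r=q^n$. For the second term, since $\fo$ is an $L^2$-isometry and $\fo(I-S_n)=(I-C_n)\fo$ by Proposition~\ref{propcommute}, we have
\[
\twonorm{(I-S_n)f}=\twonorm{(I-C_n)\fo f}=\Bigl(\int_{|\xi|>q^n}|(\fo f)(\xi)|^2\,d\xi\Bigr)^{1/2},
\]
which tends to $0$ uniformly by hypothesis~(3). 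This gives the desired uniform approximation.

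With the approximation in hand, the rest is immediate. Given $\epsilon>0$, choose $n$ so large that $\sup_{f\in F}\twonorm{f-D_nf}<\epsilon/2$. By hypothesis~(1), the set $D_nF$ is bounded (by $C$) inside the finite-dimensional space $\dn$ of dimension $q^{2n}$, and therefore totally bounded. A finite $\epsilon/2$-net for $D_nF$ is, by the triangle inequality, an $\epsilon$-net for $F$; thus $F$ is totally bounded.

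I do not expect any serious obstacle. Over $\mathbf R^d$ the analogous proof requires a mollifier and careful control of its interaction with truncation, but here the ``mollifier'' $S_n$ is simply the orthogonal projection onto locally constant functions of index $\leq q^{-n}$; it commutes with $C_n$ (the first Lemma of Section~\ref{finapprox}) and conjugates cleanly with $\fo$ into $C_n$ (Proposition~\ref{propcommute}). These facts reduce the argument to the one-line operator identity above, making the whole proof essentially mechanical once the results of Section~\ref{finapprox} are available.
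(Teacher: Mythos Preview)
Your proof is correct. The paper does not supply its own argument for Proposition~\ref{KR}; it simply cites Corollary~7 of \cite{MR2734454} for the $L^2(\mathbf{R}^d)$ case and asserts that the same proof carries over to $L^2(K)$. Your approach has the same overall shape (uniform truncation in space and in frequency, then finite-dimensional approximation), but it is tailored to the local field setting and to the machinery already developed in Section~\ref{finapprox}: by using the projections $C_n$, $S_n$, $D_n$, the ``mollifier'' becomes an honest orthogonal projection, the commutation $\fo S_n=C_n\fo$ turns the frequency-decay hypothesis into a one-line identity, and the approximating space $\mathcal D_n$ is finite-dimensional on the nose. This removes the mollifier estimates and Arzel\`a--Ascoli step needed over $\mathbf{R}^d$ and makes the proof self-contained within the paper rather than a deferred citation.
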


\begin{proposition}\label{propuc}
With $M_n=(I+H_n)^{-1}$, $L_n=\dn\simeq L^2(G_n)$, and $\mathcal{H}=L^2(K)$, the resolvents 
$(I+H_n)^{-1}$ are uniformly compact in the sense of Definition~\ref{uc}.
\end{proposition}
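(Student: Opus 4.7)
The plan is to verify directly the two conditions of Definition~\ref{uc} with $L_n=\dn$, reducing the compactness condition to the Kolmogorov-Riesz criterion (Proposition~\ref{KR}). For uniform boundedness, I would observe that $V_n$ is multiplication by the non-negative function $\ave(v,n,\cdot)$, that $P_n^\alpha=\foi_n Q_n^\alpha\fo_n$ is non-negative (since $Q_n^\alpha$ is multiplication by $\ave(q,n,\cdot)^\alpha\geq 0$), and that by convention $H_n=0$ on $\dn^\perp$. Hence $I+H_n\geq I$ globally, giving $\|(I+H_n)^{-1}\|\leq 1$. Since $V_n$ and $P_n^\alpha$ both preserve $\dn$, so does $H_n$, and therefore $\dn$ is invariant under $M_n=(I+H_n)^{-1}$ as well (and $M_n$ acts as the identity on $\dn^\perp$).

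For the compactness part, fix $g_n\in\dn$ with $\|g_n\|\leq 1$ and set $f_n=M_ng_n\in\dn$. Then $\|f_n\|\leq 1$, and testing $(I+H_n)f_n=g_n$ against $f_n$ yields the key energy estimate
\[
\langle f_n,H_nf_n\rangle=\langle f_n,g_n\rangle-\|f_n\|^2\leq 1.
\]
Since $V_n\geq 0$ and $P_n^\alpha\geq 0$, this splits into the two bounds $\langle f_n,V_nf_n\rangle\leq 1$ and $\langle f_n,P_n^\alpha f_n\rangle\leq 1$, which I would convert into the $x$-side and $\xi$-side tail-decay conditions of Proposition~\ref{KR}.

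For the $x$-tail, the computation of $V_n$ in Section~\ref{finapprox} gives
\[
1\geq\int_K\ave(v,n,x)|f_n(x)|^2\,dx\geq\int_{|x|>\qmn}v(x)|f_n(x)|^2\,dx.
\]
Given $\epsilon>0$, the assumption $v(x)\to\infty$ lets me choose $r\geq 1$ with $v(x)\geq 1/\epsilon$ for $|x|\geq r$; then $\qmn<1\leq r$ for every $n\geq 1$, so $\int_{|x|\geq r}|f_n|^2\,dx\leq\epsilon$, uniformly in $n$. For the $\xi$-tail I would invoke the key simplification $\fo|_{\dn}=\fo_n$ already proved in the previous proposition, so that $\fo f_n=\fo_n f_n\in\dn$; from $P_n^\alpha=\foi_n Q_n^\alpha\fo_n$ and the analog of the $V_n$-formula with $v$ replaced by $q^\alpha$,
\[
1\geq\int_K\ave(q,n,\xi)^\alpha|\fo f_n(\xi)|^2\,d\xi\geq\int_{|\xi|>\qmn}|\xi|^\alpha|\fo f_n(\xi)|^2\,d\xi,
\]
so $\int_{|\xi|\geq\rho}|\fo f_n|^2\,d\xi\leq\rho^{-\alpha}$ for $\rho\geq 1$. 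Together with $\|f_n\|_2\leq 1$, all three hypotheses of Proposition~\ref{KR} are satisfied, giving relative compactness of $\{f_n\}$ in $L^2(K)$. The only conceptual obstacle---the mismatch between finite and infinite Fourier transforms that required delicate work in \cite{DVV94}---has been eliminated at the outset by the identity $\fo|_{\dn}=\fo_n$, so the $\xi$-side bound is handled by the same clean calculation as the $x$-side one.
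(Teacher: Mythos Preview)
Your proposal is correct and follows essentially the same route as the paper: derive the energy estimate $\langle H_nf_n,f_n\rangle\le 1$, split it into its $V_n$- and $P_n^\alpha$-parts, and feed the resulting tail bounds into the Kolmogorov--Riesz criterion. The only difference is cosmetic: for the $\xi$-tail the paper rewrites $\|P_n^{\alpha/2}f_n\|$ as $\|S_nQ^{\alpha/2}\fo f_n\|$ and then drops the $S_n$ using local constancy of $Q^{\alpha/2}\fo f_n$ away from the origin, whereas you run the $\xi$-side computation exactly parallel to the $x$-side via $\langle f_n,P_n^\alpha f_n\rangle=\int \ave(q,n,\xi)^\alpha|\fo f_n(\xi)|^2\,d\xi$ and the identity $\fo|_{\dn}=\fo_n$; both yield the same bound $\int_{|\xi|\ge\rho}|\fo f_n|^2\le\rho^{-\alpha}$.
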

\begin{proof}
Let $(g_n)$ be as in Definition~\ref{uc} and set $f_n=(1+H_n)^{-1}g_n$. Then
\begin{equation*}\label{normestimate_f}\twonorm{f_n}^2+\langle H_nf_n,f_n\rangle=\langle(I+H_n)f_n,f_n\rangle=\langle g_n,f_n\rangle\leq\twonorm{g_n}\,\twonorm{f_n}\,,
\end{equation*}
and so $\twonorm{f_n}\leq1$ since $\langle H_nf_n,f_n\rangle\geq0$, and it also follows that $\twonorm{V_n^{1/2}f_n}\leq1$ and $\twonorm{P_n^{\alpha/2}f_n}\leq1$. The first of the last two inequalities gives:
\begin{align*}
1&\geq\twonorm{V_n^{1/2}f_n}^2\stackrel{(*)}{\geq}\int_{|x|\geq r}v(x)|f_n(x)|^2\,dx\geq\inf_{|x|\geq r}v(x)\int_{|x|\geq r}|f_n(x)|^2\,dx\\
&\Longrightarrow \int_{|x|\geq r}|f_n(x)|^2\,dx\leq\frac{1}{\inf_{|x|\geq r}v(x)}\to0\,,
\end{align*}
uniformly in $n$ as $r\to\infty$. For the inequality $(*)$ we used that $(V_nf_n)(x)=v(x)f_n(x)$ for $|x|\geq r>q^{-1}$.

Next we use the inequality $\twonorm{P_n^{\alpha/2}f_n}\leq1$, valid for all $n$. First we note that 
\begin{align*}
\twonorm{P_n^{\alpha/2}f_n}&=\twonorm{C_nP^{\alpha/2}f_n}=\twonorm{C_n\foi Q^{\alpha/2}\fo f_n}=
\twonorm{\foi S_nQ^{\alpha/2}\fo f_n}\\&=\twonorm{S_nQ^{\alpha/2}\fo f_n}
\end{align*}
which gives, for all $\rho>0$:
\begin{align*}
1&\geq \twonorm{S_nQ^{\alpha/2}\fo f_n}^2\geq\int_{|x|\geq \rho}|(S_nQ^{\alpha/2}\fo f_n)(x)|^2\,dx\stackrel{(*)}{=}
\int_{|x|\geq \rho}|(Q^{\alpha/2}\fo f_n)(x)|^2\,dx
\\&=
\int_{|x|\geq \rho}|x|^{\alpha}\cdot|(\fo f_n)(x)|^2\,dx\geq \rho^\alpha \int_{|x|\geq \rho}|(\fo f_n)(x)|^2\,dx\\
&\Longrightarrow \int_{|x|\geq \rho}|(\fo f_n)(x)|^2\,dx\leq\frac{1}{\rho^\alpha}\to 0
\end{align*}
uniformly in $n$ as $\rho\to\infty$. For the equality $(*)$ we used that $Q^{\alpha/2}\fo f_n$ is locally constant away from the origin. Uniform compactness of the $(I+H_n)^{-1}$ now follows from Proposition~\ref{KR}.
\end{proof}
It now follows that with $M_n=(I+H_n)^{-1}$ and $M=(I+H)^{-1}$ all the conditions of Proposition~\ref{mainprop} are satisfied, and via spectral mapping we can state the analog of Proposition~\ref{mainprop} for $H_n$ and $H$:
\begin{theorem}[cfr.\ Theorem 4 in \cite{DVV94}]\label{mainthm}
\begin{enumerate}
\item If $J$ is a compact subset of $[0,\infty)$ with $J\cap\sigma(H)=\emptyset$, then $J\cap\sigma(H_n)=\emptyset$ for large $n$.
\item If $\gl\in\sigma(H)$, there exists a sequence $(\gl_n)$ with $\gl_n\in\sigma(H_n)$ such that $\gl_n\to\gl$. Further, if $J$ is a compact neighborhood of an eigenvalue $\gl\in\sigma(H)$, not containing any other eigenvalues of $H$, then any sequence $\gl_n$ with $\gl_n\in\sigma(H_n)\cap J$ converges to $\gl$.
\item Let $\gl$ and $J$ be as in (2). Then $\dim P^{H_n}(J) = \dim P^H(J)$ for large $n$, 
and for each orthonormal basis $\{e_1,\dots,e_m\}$ for $r\left(P^{H}(J)\right)$ there is, for each $n$, an orthonormal basis $\{e_1^n,\dots,e^n_m\}$ for $r\left(P^{H_n}(J)\right)$ such that $\lim_{n\to\infty}e^n_i=e_i$, $i=1,\dots,m$.  
\end{enumerate}
\end{theorem}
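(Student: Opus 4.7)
The plan is to apply Proposition~\ref{mainprop} to $M_n := (I+H_n)^{-1}$ and $M := (I+H)^{-1}$ with $L_n := \dn \simeq L^2(G_n)$, and then transport the spectral conclusions back to $H_n$ and $H$ via the homeomorphism $\phi(\gl) = (1+\gl)^{-1}$ from $[0,\infty)$ onto $(0,1]$. Under $\phi$, the eigenvalues of $H$ correspond bijectively to the positive eigenvalues of $M$, and likewise the nonzero eigenvalues of $H_n|_{L_n}$ correspond to those of $M_n|_{L_n}$.

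First I would check the hypotheses of Proposition~\ref{mainprop}. Since $v \ge 0$, both $H$ and each $H_n$ are nonnegative self-adjoint, so $M$ and $M_n$ are self-adjoint with $0 \le M, M_n \le I$. The subspace $L_n$ is finite-dimensional and invariant under $H_n = D_n(P^\alpha+V)D_n$, hence invariant under $M_n$, and $M_n|_{L_n}$ is automatically compact. Compactness of $M$ was noted earlier in the section (via the classical argument used for $L^2(\mathbf{R}^d)$), uniform compactness of the sequence $(M_n)$ is Proposition~\ref{propuc}, and strong convergence $M_n \to M$ follows from the already-established strong resolvent convergence of $H_n$ to $H$, evaluated at $z = -1$ (which lies in the resolvent sets of all operators involved by nonnegativity).

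With Proposition~\ref{mainprop} applied to the pair $(M_n, M)$, the transfer to $(H_n, H)$ is a routine spectral-mapping exercise. For (1), $\phi(J) \subset (0,1]$ is compact and disjoint from $\sigma_p(M)$, so by Proposition~\ref{mainprop}(1) it is disjoint from $\sigma_p(M_n)$ for large $n$; applying $\phi^{-1}$ returns the claim for $H$. For (2), an eigenvalue $\gl$ of $H$ gives $\phi(\gl) \in \sigma_p(M)$, which by Proposition~\ref{mainprop}(2) is approximated by some $\mu_n \in \sigma_p(M_n)$; setting $\gl_n := \phi^{-1}(\mu_n) \in \sigma(H_n)$ and using continuity of $\phi^{-1}$ yields $\gl_n \to \gl$, and the uniqueness-in-a-compact-neighborhood statement is analogous. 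For (3), $\phi$ sends a compact neighborhood $J$ of $\gl$ to a compact neighborhood $\phi(J)$ of $\phi(\gl)$ containing no other eigenvalues of $M$, and the identity $P^{H_n}(J) = P^{M_n}(\phi(J))$ (and similarly for $H$) lets the dimension equality and the convergence of orthonormal bases of eigenvectors carry over verbatim.

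The only bookkeeping nuisance I foresee is the eigenvalue of $M_n$ at $1$ coming from $\dn^\perp$, where $H_n$ vanishes by construction. This eigenvalue lives outside $L_n$ and is therefore invisible to Proposition~\ref{mainprop}, whose conclusions are phrased in terms of eigenvectors inside $L_n$. Once this is noted, no substantive work remains beyond invoking the already-proved ingredients.
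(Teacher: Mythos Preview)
Your proposal is correct and follows exactly the paper's approach: the paper simply notes that with $M_n=(I+H_n)^{-1}$ and $M=(I+H)^{-1}$ all the hypotheses of Proposition~\ref{mainprop} are satisfied (by strong resolvent convergence, compactness of $M$, and the uniform compactness of Proposition~\ref{propuc}), and then invokes spectral mapping without further comment. You have merely spelled out the spectral-mapping step via $\phi(\lambda)=(1+\lambda)^{-1}$ and flagged the harmless extraneous eigenvalue of $M_n$ on $\dn^\perp$, both of which the paper leaves implicit.
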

\section{Numerical investigation of the Schr\"odinger operator over $\q_3[\sqrt{3}]$}
\label{numeric}
\subsection{Overview}
In \cite[Ch.\ 3, Section XII]{VVZ94} a detailed analysis was carried out on the spectrum of the $p$-adic 
Schr\"odinger operator, and in \cite[Ch.\ 3]{MR1848777} a similar analysis was performed on the Schr\"odinger operator over a general local field. 

Let as before $H=\pa +V$ denote the Schr\"odinger operator over a local field $K$. The eigenfunctions of $H$ can be divided into two main types, corresponding to two complementary subspaces of $L^2(K)$: those which are supported on a single spherical shell (which we shall call shell functions), and those which are radial\footnote{With notation as in \cite{VVZ94,MR1848777}, the set of shell functions comprises all the type~I functions plus the shell functions of type~II; the radial functions are all of type~II.}. Of these, only the shell functions are completely understood: They belong to eigenvalues which can be determined from Diophantine equations, and there are explicit formulae for them. For radial eigenfunctions no such explicit formulae seem to be known.

In this numerical study we specialize to the case of the Schr\"odinger operator $H=\frac{1}{2}(P^2+Q^2)$ of the harmonic oscillator over the local field $\q_3[\sqrt{3}]$, which is a quadratic and totally ramified extension of $\q_3$. We were interested in the following questions:
\begin{itemize}
\item Do eigenfunctions of both types (shell functions and radial functions) show up already at the finite level?
\item Is there good agreement between the theoretical and numerical eigenvalues?
\item Is there good agreement between the theoretical and numerical eigenfunctions?
\item Are multiplicities correct?
\end{itemize}
The answer to all these questions was 'yes'. To illustrate this, we sum up some of the results in Table~\ref{eigvals}.
\subsection{More details about the numerical experiment}
The extension $\qtt/\q_3$ is totally ramified, so with notation as in section~\ref{localfields} we have $e=2$, and hence $f=1$ since $ef=[\qtt:\q_3]=2$. Further, from $q=p^f$ follows $q=p=3$, and as uniformizer we can take $\beta=\sqrt{3}$, hence $|\beta|=1/q=1/3$. For the exponent of the different we have $d=1$, so the character $\chi$ defined in subsection~\ref{ft} becomes  
$\chi(x)=\exp\left(2\pi i\{\tr_{\qtt/\q_3}(\sqrt{3}^{-1}x)\}\right)$, $x\in\qtt$.

For the finite model we did experiments with $n=1,2,3,4$, so we were working with finite grids of sizes $|X_1|= 9$, $|X_2|=9^2=81$, $|X_3|=9^3=729$, and $|X_4|=9^4=6561$ï, respectively. Of particular interest to us was how the eigenfunctions came out: Would they clearly exhibit characteristics as shell functions or radial functions? They did. To illustrate this we give in Table~\ref{eigenfunctions} an excerpt from the value tables of three eigenfunctions: one is radial, one is a linear combination of two shell functions, and one is a pure shell function. We also wanted to compare our numerically computed eigenfunctions to the theoretical ones (evaluated on the grid). To do this, we measured the distance from each of the former to the linear span of the latter. Up to machine accuracy ($10^{-16}$), the distance came out as zero. We find this quite remarkable.  

\begin{thebibliography}{BDLW13}

\bibitem[AGK00]{AGK00}
S.~Albeverio, E.~I. Gordon, and A.~Yu. Khrennikov, \emph{Finite-dimensional
  approximations of operators in the {H}ilbert spaces of functions on locally
  compact abelian groups}, Acta Appl. Math. \textbf{64} (2000), no.~1, 33--73.
  \MR{2002f:47030}

\bibitem[BDLW13]{MR3127393}
E.~M. Bakken, T.~Digernes, M.~U. Lund, and D.~Weisbart, \emph{Finite
  approximations of physical models over {$p$}-adic fields}, p-Adic Numbers
  Ultrametric Anal. Appl. \textbf{5} (2013), no.~4, 249--259. \MR{3127393}

\bibitem[DVV94]{DVV94}
Trond Digernes, Veeravalli~S. Varadarajan, and S.~R.~S. Varadhan, \emph{Finite
  approximations to quantum systems}, Rev. Math. Phys. \textbf{6} (1994),
  no.~4, 621--648. \MR{96e:81028}

\bibitem[HOH10]{MR2734454}
Harald Hanche-Olsen and Helge Holden, \emph{The {K}olmogorov-{R}iesz
  compactness theorem}, Expo. Math. \textbf{28} (2010), no.~4, 385--394.
  \MR{2734454 (2012a:46048)}

\bibitem[Koc01]{MR1848777}
Anatoly~N. Kochubei, \emph{Pseudo-differential equations and stochastics over
  non-{A}rchimedean fields}, Monographs and Textbooks in Pure and Applied
  Mathematics, vol. 244, Marcel Dekker Inc., New York, 2001. \MR{MR1848777
  (2003b:35220)}

\bibitem[RS80]{MR751959}
Michael Reed and Barry Simon, \emph{Methods of modern mathematical physics.
  {I}}, second ed., Academic Press Inc. [Harcourt Brace Jovanovich Publishers],
  New York, 1980, Functional analysis. \MR{751959 (85e:46002)}

\bibitem[VVZ94]{VVZ94}
V.~S. Vladimirov, I.~V. Volovich, and E.~I. Zelenov, \emph{$p$-adic analysis
  and mathematical physics}, World Scientific Publishing Co. Inc., River Edge,
  NJ, 1994. \MR{95k:11155}

\bibitem[Wei74]{MR0427267}
Andr{\'e} Weil, \emph{Basic number theory}, third ed., Springer-Verlag, New
  York-Berlin, 1974, Die Grundlehren der Mathematischen Wissenschaften, Band
  144. \MR{0427267 (55 \#302)}

\end{thebibliography}

\providecommand{\bysame}{\leavevmode\hbox to3em{\hrulefill}\thinspace}
\providecommand{\MR}{\relax\ifhmode\unskip\space\fi MR }
\providecommand{\MRhref}[2]{%
  \href{http://www.ams.org/mathscinet-getitem?mr=#1}{#2}
}
\providecommand{\href}[2]{#2}

\newpage

\appendix
\section{Tables for numerical eigenvalues and eigenfunctions}
The tables in this section should be self-explanatory\footnote{In the estimate for the lowest eigenvalue in Table~\ref{eigvals} (first entry in column 1) we are assuming that the estimate given in \cite[p. 190]{VVZ94} is valid also in our setting. We haven’t checked this in detail, but there are strong indications that it is true.}. The data are taken from a computer run with $n=2$ (i.e., $81$ points in the finite grid). Each of the functions in Table~\ref{eigenfunctions} is represented with 28 values, with values coming from each of the 5 shells which occur for $n=2$.  

\begin{table}[b]\caption{Numerical approximations to the spectral data of $H=\frac{1}{2}(P^2+Q^2)$ over $\q_3[\sqrt{3}]$.}\label{eigvals}
\begin{tabular}{|r|r|r|r|l|l|}\hline
\parbox[t]{1.6cm}{Theoretical eigenvalue}&\parbox[t]{1.6cm}{Numerical eigenvalue}&\parbox[t]{1.6cm}{Theoretical multiplicity}&\parbox[t]{1.6cm}{Numerical multiplicity}&\parbox[t]{1.6cm}{\raggedright Type of eigenfunction}&Comment\\\hline\hline
\parbox{2.2cm}{\raggedright $0<\lambda_0<9/13$\\$\approx0.6923$} &0.6684&1&1&radial&\\\hline
?&4.6922&?&1&radial&\\\hline
?&4.7158&?&1&radial&\\\hline
5&5.0000&2&2&shell function&\parbox[t]{1.6cm}{\raggedright $2=1+1$: Coming from two different shells.}\\\hline
9&9.0000&4&4&shell function&\parbox[t]{1.6cm}{\raggedright All supported on the same shell.}\\\hline
?&40.5213&?&2&radial&\\\hline
\parbox[t]{1.6cm}{$40+5/9=40.5555\dots$}&40.5555&2&2&shell function&\parbox[t]{1.6cm}{\raggedright $2=1+1$: Coming from two different shells.}\\\hline
41&41.0000&8&8&shell function&\parbox[t]{1.6cm}{\raggedright $8=4+4$: Coming from two different shells.}\\\hline
45&45.0000&24&24&shell function&\parbox[t]{1.6cm}{\raggedright $24=12+12$: Coming from two different shells.}\\\hline
\end{tabular}\vspace{0.5\baselineskip}
\end{table}
\newpage

\begin{table}\caption{Eigenfunctions for three different eigenvalues, 28 values for each function, coming from all the 5 shells. Both kinds of eigenfunctions occur (shell functions and radial functions). -- Shell no.\ $k$ ($k=2,1,0,-1,-\infty$) is the shell $|x|=3^k$ (so shell no.\ $-\infty$ is the shell $|x|=3^{-\infty}=0$).}\label{eigenfunctions}
\begin{tabular}{|r|r|r|r|r|r|}\hline
\multicolumn{2}{|l|}{\parbox{1.7in}{Eigenfunction for the lowest eigenvalue \fbox{$\lambda\approx0.6684$}.\\ It exhibits a perfect radial behavior. Notice also that the function is strictly positive, in accordance with the corresponding statement for the case $K=\qp$ in \cite[p.\ 186]{VVZ94}.}}&\multicolumn{2}{|l|}{\parbox{1.7in}{Eigenfunction for \fbox{$\lambda=5$}.\\ Eigenfunctions here are linear combinations of shell functions from two different shells (shells 1 and 0). As should be expected, the function below  exhibits non-radial behavior, being non-constant on each shell where it doesn't vanish (shells 1 and 0).}}&\multicolumn{2}{|l|}{\parbox{1.7in}{Eigenfunction for 
\fbox{$\lambda=9$}. It exhibits a perfect shell function behavior, with support on shell no.\ 1.}}\\\hline\hline
&Shell no.&&Shell no.&&Shell no.\\\hline
$3.5818432\cdot10^{-1}$&$-\infty$&$1.8757870\cdot10^{-15}\approx0$&$-\infty$&$-3.8765003\cdot10^{-16}\approx0$&$-\infty$\\\hline
$5.5430722\cdot10^{-5}$&2&$2.0896995\cdot10^{-16}\approx0$&2&$1.6021680\cdot10^{-16}\approx0$&2\\
$5.5430722\cdot10^{-5}$&2&$8.7737711\cdot10^{-17}\approx0$&2&$-9.1411700\cdot10^{-17}\approx0$&2\\
$5.5430722\cdot10^{-5}$&2&$-1.4801152\cdot10^{-16}\approx0$&2&$5.1268297\cdot10^{-17}\approx0$&2\\
$5.5430722\cdot10^{-5}$&2&$3.0773313\cdot10^{-16}\approx0$&2&$2.7677667\cdot10^{-16}\approx0$&2\\
$5.5430722\cdot10^{-5}$&2&$-4.5409159\cdot10^{-17}\approx0$&2&$-4.5822760\cdot10^{-16}\approx0$&2\\
$5.5430722\cdot10^{-5}$&2&$-1.0479409\cdot10^{-16}\approx0$&2&$-1.3758518\cdot10^{-16}\approx0$&2\\
$5.5430722\cdot10^{-5}$&2&$-2.3471948\cdot10^{-17}\approx0$&2&$2.1385872\cdot10^{-17}\approx0$&2\\
$5.5430722\cdot10^{-5}$&2&$7.9466194\cdot10^{-17}\approx0$&2&$-1.0549816\cdot10^{-16}\approx0$&2\\
$5.5430722\cdot10^{-5}$&2&$2.3950293\cdot10^{-16}\approx0$&2&$2.3917324\cdot10^{-16}\approx0$&2\\
$5.5430722\cdot10^{-5}$&2&$6.4773691\cdot10^{-17}\approx0$&2&$1.2912546\cdot10^{-16}\approx0$&2\\
$5.5430722\cdot10^{-5}$&2&$-1.1431061\cdot10^{-16}\approx0$&2&$-6.0210598\cdot10^{-17}\approx0$&2\\
$5.5430722\cdot10^{-5}$&2&$-1.3177515\cdot10^{-17}\approx0$&2&$-3.9251100\cdot10^{-17}\approx0$&2\\
$5.5430722\cdot10^{-5}$&2&$1.3595786\cdot10^{-16}\approx0$&2&$-5.0103544\cdot10^{-17}\approx0$&2\\
$5.5430722\cdot10^{-5}$&2&$3.2839452\cdot10^{-17}\approx0$&2&$1.2137971\cdot10^{-16}\approx0$&2\\
$5.5430722\cdot10^{-5}$&2&$7.8206625\cdot10^{-17}\approx0$&2&$-1.0063910\cdot10^{-16}\approx0$&2\\
$5.5430722\cdot10^{-5}$&2&$3.3933100\cdot10^{-17}\approx0$&2&$-7.7900493\cdot10^{-17}\approx0$&2\\
$5.5430722\cdot10^{-5}$&2&$8.8459742\cdot10^{-17}\approx0$&2&$2.2672330\cdot10^{-16}\approx0$&2\\
$5.5430722\cdot10^{-5}$&2&$2.2115193\cdot10^{-17}\approx0$&2&$-1.1819127\cdot10^{-16}\approx0$&2\\\hline
$1.2747433\cdot10^{-2}$&1&$-2.3459638\cdot10^{-1}$&1&$5.9907185\cdot10^{-2}$&1\\
$1.2747433\cdot10^{-2}$&1&$2.3459638\cdot10^{-1}$&1&$-4.1084268\cdot10^{-1}$&1\\
$1.2747433\cdot10^{-2}$&1&$-2.3459638\cdot10^{-1}$&1&$-1.0595734\cdot10^{-1}$&1\\
$1.2747433\cdot10^{-2}$&1&$2.3459638\cdot10^{-1}$&1&$2.7644342\cdot10^{-2}$&1\\
$1.2747433\cdot10^{-2}$&1&$-2.3459638\cdot10^{-1}$&1&$4.6050157\cdot10^{-2}$&1\\
$1.2747433\cdot10^{-2}$&1&$2.3459638\cdot10^{-1}$&1&$3.8319834\cdot10^{-1}$&1\\\hline
$3.1960943\cdot10^{-1}$&0&$3.9500330\cdot10^{-2}$&0&$1.2637350\cdot10^{-17}\approx0$&0\\
$3.1960943\cdot10^{-1}$&0&$-3.9500330\cdot10^{-2}$&0&$-1.6035100\cdot10^{-17}\approx0$&0\\\hline
$3.5768544\cdot10^{-1}$&-1&$2.2996138\cdot10^{-17}\approx0$&-1&$-9.9411507\cdot10^{-17}\approx0$&-1\\\hline
\end{tabular}

\end{table}

\end{document}